\newenvironment{varalgorithm}[1]
  {\algorithm}
  {\endalgorithm}
\newenvironment{list4}{
	\begin{list}{$\bullet$}{%
			\setlength{\itemsep}{0.05cm}
			\setlength{\labelsep}{0.2cm}
			\setlength{\labelwidth}{0.3cm}
			\setlength{\parsep}{0in} 
			\setlength{\parskip}{0in}
			\setlength{\topsep}{0in} 
			\setlength{\partopsep}{0in}
			\setlength{\leftmargin}{0.16in}}}
	{\end{list}}
\newenvironment{list4a}{
	\begin{list}{$\bullet$}{%
			\setlength{\itemsep}{0.05cm}
			\setlength{\labelsep}{0.2cm}
			\setlength{\labelwidth}{0.3cm}
			\setlength{\parsep}{0in} 
			\setlength{\parskip}{0in}
			\setlength{\topsep}{0in} 
			\setlength{\partopsep}{0in}
			\setlength{\leftmargin}{0.16in}}}
	{\end{list}}
\newenvironment{list5}{
	\begin{list}{$\bullet$}{%
			\setlength{\itemsep}{0.05cm}
			\setlength{\labelsep}{0.2cm}
			\setlength{\labelwidth}{0.3cm}
			\setlength{\parsep}{0in} 
			\setlength{\parskip}{0in}
			\setlength{\topsep}{0in} 
			\setlength{\partopsep}{0in}
			\setlength{\leftmargin}{0.18in}}}
	{\end{list}}
\let\mathbb=\mathds 
\newtheorem{theorem}{Theorem}
\newtheorem{defn}{Definition}
\newtheorem{assum}{Assumption}
\newtheorem{remark}{Remark}
\DeclareMathOperator*{\argmin}{arg\,min}
\begin{document}

\title{\LARGE \bf Distributed Finite Time $k$-means Clustering \\ with Quantized Communucation and Transmission Stopping}

\author{Apostolos~I.~Rikos, Gabriele Oliva, Christoforos N. Hadjicostis, and Karl~H.~Johansson
\thanks{Apostolos~I.~Rikos and K.~H.~Johansson are with the Division of Decision and Control Systems, KTH Royal Institute of Technology, SE-100 44 Stockholm, Sweden. E-mails: {\tt \{rikos,kallej\}@kth.se}.}
\thanks{Gabriele~Oliva is with the Unit of Automatic Control, Department of Engineering, Universit\`a Campus Bio-Medico di Roma, via \'Alvaro del Portillo 21, 00128, Rome, Italy. E-mail:{\tt~g.oliva@unicampus.it}.}
\thanks{C. N. Hadjicostis is with the Department of Electrical and Computer Engineering, University of Cyprus, 1678 Nicosia, Cyprus.  E-mail:{\tt~chadjic@ucy.ac.cy}.}
}

\maketitle
\thispagestyle{empty}
\pagestyle{empty}

%
%
%
%
\begin{abstract} 
In this paper, we present a distributed algorithm which implements the $k$-means algorithm in a distributed fashion for multi-agent systems with directed communication links. 
The goal of $k$-means is to partition the network's agents in mutually exclusive sets (groups) such that agents in the same set have (and possibly share) similar information and are able to calculate a representative value for their group. 
During the operation of our distributed algorithm, each node (i) transmits quantized values in an event-driven fashion, and (ii) exhibits distributed stopping capabilities. 
Transmitting quantized values leads to more efficient usage of the available bandwidth and reduces the communication bottleneck. 
Also, in order to preserve available resources, nodes are able to distributively determine whether they can terminate the operation of the proposed algorithm. 
We characterize the properties of the proposed distributed algorithm and show that its execution (on any static and strongly connected digraph) will partition all agents to mutually exclusive clusters in finite time. 
We conclude with examples that illustrate the operation, performance, and potential advantages of the proposed algorithm.
\end{abstract}

\begin{IEEEkeywords} 
Clustering, $k$-means optimization, distributed algorithms, quantization, event-triggered, finite-time termination. 
\end{IEEEkeywords}

%
%
%
%
\section{Introduction}\label{intro}
Data Clustering is a fundamental problem whereby data is clustered in groups and a representative value is identified for each group. Such methods are adopted in a broad variety of different applications, ranging from  customer segmentation~\cite{li2022customer} to cybersecurity~\cite{jain2022k}.
Notably, in the case of wireless sensor networks, a large amount of data is typically generated or sensed \cite{ferjaoui2020data}.
In this view, the ability of a set of agents to collectively cluster their sensed data would allow to contain the overall amount of information and to establish functional connections among the agents, e.g., by identifying other agents with similar values.

In the literature, there have been various works at distributing clustering (e.g., see \cite{2000:Dhillon_Modha, 2006:Bandyopadhyay} and references therein) and recently on distributed algorithms such as the $k$-means~\cite{2015:Oliva, 2017:Zheng_Qin, 2008:Forero_Giannakis} and the C-means~\cite{faramondi2019distributed}. 
However, most clustering algorithms feature a message exchanges consisting of floating point values.
This leads to a significant increase in the computational and bandwidth requirements and may be responsible for introducing quantization errors or approximations.

\noindent
\textbf{Main Contributions.}
In this paper, we aim to analyze the distributed $k$-means clustering problem while we reduce the communication bottleneck between nodes. 
Specifically, we focus on the realistic scenario where nodes communicate with quantized messages and we present a distributed algorithm which operates in an event-triggered fashion and converges in finite time. 
Furthermore, in order to preserve available resources, nodes are able to determine whether the algorithm converged so as to terminate their operation. 
The main contributions of our paper are the following.
\begin{list5}
\item We present a novel distributed algorithm for solving the $k$-means clustering problem. 
The algorithm allows nodes to calculate in a distributed fashion a set of centroids that minimize the sum of squares within every cluster. 
During its operation, nodes exchange quantized messages of finite length with their neighboring nodes. 
We show that our proposed algorithm converges in a deterministic manner after a finite number of time steps, and calculates the \textit{exact} result (represented as the ratio of two quantized values) without introducing any final error (e.g., due to quantization or due to asymptotic convergence). 
Furthermore, the algorithm's operation relies on (i) calculating the average of the observations of every cluster, and (ii) utilizing a distributed stopping strategy to determine whether convergence has been achieved and thus terminate the operation. 
For this reason, we present (i) a novel algorithm for quantized average consensus for the case where each node's initial state is a vector (see Algorithm~\ref{algorithm_1_det_av_multi}), and (ii) a novel distributed stopping mechanism in order to terminate the operation of the proposed algorithm in a finite number of time steps (and hence solve the $k$-means clustering problem in finite time). 
Note that to the authors knowledge, this is the first algorithm for solving the $k$-means clustering problem in a fully distributed manner for the case where the operation of the nodes relies on quantized communication; see Algorithm~\ref{algorithm1}. 
\item We calculate a deterministic upper bound on the required time steps for the convergence of our algorithm. 
Our bound depends on the network structure and the number of centroid calculations; see Theorem~\ref{thm:cluster_convergence}. 
\item We demonstrate the operation of our algorithm via various simulations. 
Furthermore, we compare our algorithm's performance against other $k$-means clustering algorithms; see Section~\ref{sec:results}. 
\end{list5}
The operation of our proposed algorithm relies on consecutive executions of a quantized average consensus algorithm along with a distributed stopping mechanism. 
More specifically, initially each node assigns its observation to the cluster characterized by the nearest centroid. 
It executes a quantized average consensus algorithm to calculate the new centroid values. 
Then, it utilizes a distributed stopping mechanism in order to determine whether the new centroid values have been calculated. 
This allows our algorithm to calculate the optimal centroid values in finite time. 
Furthermore, the state of each node is represented as a fraction of two quantized values. 
This characteristic allows each node to calculate the exact value of each centroid without any error. 
As a result, the proposed algorithm is able to calculate the \textit{exact} local optimal solution of the $k$-means clustering problem. 

The current literature comprises of centralized or distributed algorithms whose operation with real values increases bandwidth and processing requirements and leads to approximate solutions. 
Our paper is a major departure from the current literature since the operation of each node relies on quantized communication. 
Utilization of quantized values allows more efficient usage of network resources, and leads to calculation in finite time of the exact solution without any error. 
Therefore, our proposed distributed algorithm introduces a novel approach for data clustering with efficient (quantized) communication.


%
%
%
%
\section{Mathematical Notation}\label{sec:preliminaries}


\textbf{Graph Theoretic Notions. }
The sets of real, rational, and integer numbers are denoted by $ \mathbb{R}, \mathbb{Q}$, and $\mathbb{Z}$, respectively. 
The symbol $\mathbb{Z}_{\geq 0}$ ($\mathbb{Z}_{>0}$) denotes the set of nonnegative (positive) integer numbers. 
The set $\mathbb{Z}_{\leq 0}$ ($\mathbb{Z}_{<0}$) denotes the set of nonpositive (negative) integer numbers. 
For any real number $a \in \mathbb{R}$, $\lfloor a \rfloor$ denotes the greatest integer less than or equal to $a$ (i.e., the floor of $a$), and $\lceil a \rceil$ denotes the least integer greater than or equal to $a$ (i.e., the ceiling of $a$). 
Vectors are denoted by small letters, and matrices are denoted by capital letters. 
The transpose of matrix $A$ is denoted by $A^T$. 
For matrix $A\in \mathbb{R}^{n\times n}$, $A_{(ij)}$ denotes the entry at row $i$ and column $j$. 
For a vector $a\in \mathbb{R}^{ n}$, $a_{(i)}$ denotes the entry at position $i$. 
The all-ones vector is denoted as $\mathbf{1}$ and the identity matrix is denoted as $I$ (of appropriate dimensions). 

Let us consider a network of $n$ nodes ($n > 2$) where each node can communicate only with its immediate neighbors. 
The communication topology is captured by a directed graph (digraph) defined as $\mathcal{G}_d = (\mathcal{V}, \mathcal{E})$. 
In digraph $\mathcal{G}_d$, $\mathcal{V} =  \{v_1, v_2, \dots, v_n\}$ is the set of nodes with cardinality $n  = | \mathcal{V} | \geq 2 $, and $\mathcal{E} \subseteq \mathcal{V} \times \mathcal{V} - \{ (v_j, v_j) \ | \ v_j \in \mathcal{V} \}$ is the set of edges (self-edges excluded) with cardinality $m = | \mathcal{E} |$. 
A directed edge from node $v_i$ to node $v_j$ is denoted by $m_{ji} \triangleq (v_j, v_i) \in \mathcal{E}$, and captures the fact that node $v_j$ can receive information from node $v_i$ (but not the other way around). 
We assume that the given digraph $\mathcal{G}_d = (\mathcal{V}, \mathcal{E})$ is \textit{strongly connected}. 
This means that for each pair of nodes $v_j, v_i \in \mathcal{V}$, $v_j \neq v_i$, there exists a directed \textit{path}\footnote{A directed \textit{path} from $v_i$ to $v_j$ exists if we can find a sequence of nodes $v_i \equiv v_{l_0},v_{l_1}, \dots, v_{l_t} \equiv v_j$ such that $(v_{l_{\tau+1}},v_{l_{\tau}}) \in \mathcal{E}$ for $ \tau = 0, 1, \dots , t-1$.} from $v_i$ to $v_j$. 
The diameter $D$ of a digraph is the longest shortest path between any two nodes $v_j, v_i \in \mathcal{V}$. 
The subset of nodes that can directly transmit information to node $v_j$ is called the set of in-neighbors of $v_j$ and is represented by $\mathcal{N}_j^- = \{ v_i \in \mathcal{V} \; | \; (v_j,v_i)\in \mathcal{E}\}$. 
The cardinality of $\mathcal{N}_j^-$ is called the \textit{in-degree} of $v_j$ and is denoted by $\mathcal{D}_j^-$. 
The subset of nodes that can directly receive information from node $v_j$ is called the set of out-neighbors of $v_j$ and is represented by $\mathcal{N}_j^+ = \{ v_l \in \mathcal{V} \; | \; (v_l,v_j)\in \mathcal{E}\}$. 
The cardinality of $\mathcal{N}_j^+$ is called the \textit{out-degree} of $v_j$ and is denoted by $\mathcal{D}_j^+$. 


\textbf{Node Operation.}
We assume that each node $v_j$ can directly transmit messages to each out-neighbor; however, it cannot necessarily receive messages (at least not directly) from them. 
In the proposed distributed algorithm, each node $v_j$ assigns a \textit{unique order} in the set $\{0,1,..., \mathcal{D}_j^+ -1\}$ to each of its outgoing edges $m_{lj}$, where $v_l \in \mathcal{N}^+_j$. 
More specifically, the order of link $(v_l,v_j)$ for node $v_j$ is denoted by $P_{lj}$ (such that $\{P_{lj} \; | \; v_l \in \mathcal{N}^+_j\} = \{0,1,..., \mathcal{D}_j^+ -1\}$). 
This unique predetermined order is used during the execution of the proposed algorithm as a way of allowing node $v_j$ to transmit messages to its out-neighbors in a \textit{round-robin}\footnote{When executing the deterministic protocol, each node $v_j$ transmits to its out-neighbors, one at a time, by following the predetermined order. The next time it transmits to an out-neighbor, it continues from the outgoing edge it stopped the previous time and cycles through the edges in a round-robin fashion.} fashion. 


\section{Preliminaries on Distributed Coordination}\label{sec:DistributedCoordination}


The distributed $\max$-consensus algorithm calculates the maximum value of the network in a finite number of time steps \cite{2008:Cortes}. 
The intuition of the algorithm is the following: every node $v_{j}$ in the network, performs the following update rule:
\begin{align}
x_j[\mu+1] = \max_{v_{i}\in \mathcal{N}_j^{-} \cup \{v_{j}\}}\{ x_i[\mu] \}.
\end{align}
The $\max$-consensus algorithm converges to the maximum value among all nodes in a finite number of steps $s$, where $s \leq D$ (see, e.g., \cite[Theorem 5.4]{2013:Giannini}). 
Note here that similar results hold also for the $\min$-consensus algorithm.

\section{Problem Formulation}\label{sec:probForm}

\subsection{$k$-means Clustering}

The problem we present in this paper is borrowed from \cite{2015:Oliva}, but is adjusted in the context of quantized communication over directed networks. 
Specifically, let us consider a set of $n$ observations $x_1, \ldots, x_n$, where $x_i \in \mathbb{R}^d$ for $i \in \{ 1, 2, ..., n \}$.  
Each observation $x_i$ is assigned to each node $v_i$, respectively. 
In the $k$-means clustering problem, we want to partition the $n$ observations into $k$ sets (where $k \leq n$) or {\it clusters} $\mathcal{C} = \{ C_1, \ldots, C_k \}$ so we can minimize the sum of squares within every cluster. 
Specifically, we want to find a set of centroids $\textbf{c}_1, \ldots, \textbf{c}_k$ (where $\textbf{c}_{\gamma} \in \mathbb{R}^d$, for $\gamma \in \{ 1, 2, ..., k \}$), each associated to a cluster, which solve the following optimization problem: 
\begin{align}
\mathcal{D} & = \argmin_{C} \sum_{i=1}^k \sum_{j=1}^n r_{ij} ||x_j-\textbf{c}_i||^2 , \label{original_optptob} \\
& \text{s.t.} \ \ \sum_{i=1}^k r_{ij} = 1 , \ \text{for all} \ j = 1,2,...n \ \text{and} \\
 & r_{ij} \in \{ 0, 1 \} , \qedhere
\qedhere
\end{align}
where $r_{ij} \in \mathbb{Z}_{\geq 0}^{n \times k}$.  
Note here that $r_{ij} = 1$, means that node $v_j$ belongs in cluster $C_i$ ($r_{ij} = 0$ otherwise). 

The problem in \eqref{original_optptob} is hard to solve exactly when $n$ and $k$ are large\footnote{The problem is NP-hard in general Euclidean space $\mathbb{R}^d$, even for $2$ clusters \cite{2009:Popat} and for a general number of clusters $k$, even in the plane \cite{2009:Mahajan}.}, thus calling for approximate solutions.
In particular, the $k$-means algorithm represents a successful strategy to compute a local optimal solution to the above problem.
The intuition of the $k$-means algorithm is that it starts with a random set of $k$ centroids $\textbf{c}_1(1), \ldots, \textbf{c}_k(1)$, and alternates at each step between an {\it assignment} and a {\it refinement} phase.

\textbf{Assignment phase:} Each observation $x_\lambda$ is assigned to the set characterized by the nearest centroid, i.e.:
\begin{equation}\label{Assign}
C_i(T) = \{ x_\lambda : ||x_\lambda-\textbf{c}_i||^2 \leq ||x_\lambda-\textbf{c}_j||^2, \ i,j \in [1, k] \} 
\end{equation}

\textbf{Refinement phase}: Each centroid $\textbf{c}_i(T+1)$ is updated as:
\begin{equation}
\textbf{c}_i(T+1)= \frac{\sum_{v_j \in C_i(T)} x_j}{| C_i(T) |}
\end{equation}
The two steps are iterated until convergence (i.e., if the centroids no longer change) or up to a maximum of $M$ iterations. 

\begin{figure}[t]
\begin{center}
\includegraphics[width=3.5in]{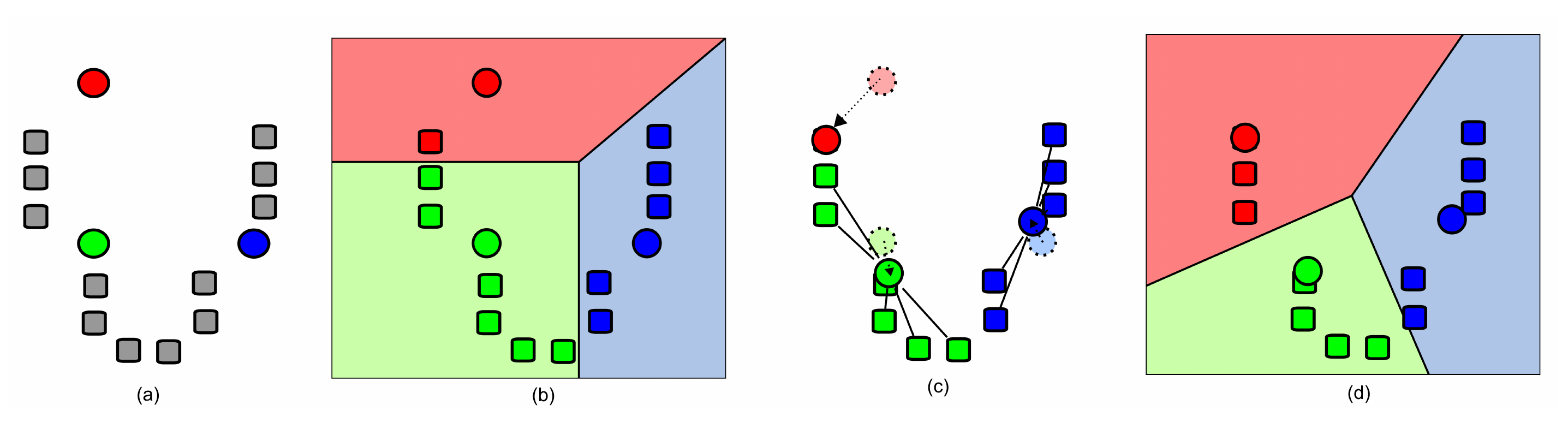}
\caption{Example of execution of $k$-means algorithm (source; Wikipedia Commons available under GNU Free Documentation License v. 1.2).}
 \label{fig:kmeans}
\end{center}
\end{figure}

Fig.~\ref{fig:kmeans} is an example of execution of the algorithm for a set of $n=12$ observations in $\mathbb{R}^2$ and for $k=3$.
Fig.~\ref{fig:kmeans}~(a) shows with circles the initial centroids.  Fig.~\ref{fig:kmeans}~(b) and Fig.~\ref{fig:kmeans}~(c) report the assignment and refinement phases for the first step. 
Fig.~\ref{fig:kmeans}~(d) depicts the assignment phase for the second step. 

The $k$-means algorithm is known to converge to a local optimum value, while there is no guarantee to converge to the global optimum \cite{1967:MacQueen}.
However, given the complexity of the problem at hand, the $k$-means algorithm is de facto the most diffused heuristic algorithm: indeed ``ease of implementation, simplicity, efficiency, and empirical success are the main reasons for its popularity'' \cite{2010:Jain}. 
Furthermore, note that the convergence of the algorithm strongly depends on the initial choice of the centroids. 
Therefore, a common practice is to execute the algorithm several times -- each time with different initial conditions -- and select the best solution. 

\subsection{Modification of $k$-means Clustering Problem: Finite Time $k$-means Clustering with Quantized Communication}\label{modified_kmeans}
In this paper, we develop a distributed algorithm that allows nodes to find a locally optimal solution to the problem \textbf{P1} presented below, while transmitting quantized information via available communication links. 

\textbf{P1.} Consider a static strongly connected digraph $\mathcal{G}_d$, where each node $v_j$ is endowed with a quantized state $x_j \in \mathbb{R}^d$. 
We aim at developing a distributed algorithm which calculates in a distributed fashion a set of centroids $\textbf{c}_1, \ldots, \textbf{c}_k$ (where $\textbf{c}_{\gamma} \in \mathbb{R}^d$, for $\gamma \in \{ 1, 2, ..., k \}$) and association variables $r_{ij}$, which represent a locally optimal solution to the following optimization problem: 
    \begin{align} 
    \mathcal{D} & = \argmin_{C} \sum_{i=1}^k \sum_{j=1}^n r_{ij} ||x_j-\textbf{c}_i||^2 , \label{optptob} \\
    & \text{s.t.} \ \ \sum_{i=1}^k r_{ij} = 1 , \ \text{for all} \ j = 1,2,...n \ \label{assign_cond} \\
    & r_{ij} \in \{ 0, 1 \} \label{assign_one}, \qedhere
    \qedhere
    \end{align}
    where $r_{ij} \in \mathbb{Z}_{\geq 0}^{n \times k}$.  
During the proposed algorithm each node transmits quantized information. 
The proposed algorithm converges in a finite number of time steps, upper bounded by a polynomial function which depends on the communication network. 
Each node ceases transmissions once convergence has been achieved. 

\section{Finite Time $k$-means Clustering with Quantized Communication}
\label{sec:distr_algo}

In this section we propose a distributed algorithm which solves problem \textbf{P1} in Section~\ref{modified_kmeans}. 
We first present an extended version of the algorithm in \cite{2020:Rikos_TAC_Mass_Acc} which is important for our subsequent development.

\subsection{Multidimensional Deterministic Exact Quantized Average Consensus}\label{Multidim_Prel_Aver}

In this section, we present an extended version of the deterministic algorithm in \cite{2020:Rikos_TAC_Mass_Acc}. 
In this version, each node is able to calculate the exact average of the initial states in a deterministic fashion after a finite number of time steps for the case where the state of each node is an integer vector (i.e., $y_j[\mu] \in \mathbb{Z}^{d}$, where $\mu, d \in \mathbb{Z}_{> 0}$). 
The proposed algorithm is detailed as Algorithm~\ref{algorithm_1_det_av_multi} below. 

\noindent
\begin{varalgorithm}{1}
\caption{Multidimensional Deterministic Exact Quantized Average Consensus Algorithm}
\noindent \textbf{Input:} A strongly connected digraph $\mathcal{G}_d = (\mathcal{V}, \mathcal{E})$ with $n=|\mathcal{V}|$ nodes and $m=|\mathcal{E}|$ edges. 
Each node $v_j \in \mathcal{V}$ has an initial quantized state $y_j[1] \in \mathbb{Z}^{d}$. \\ 
\textbf{Initialization:} Every node $v_j \in \mathcal{V}$ does the following: 
\begin{list4}
\item assigns to each of its outgoing edges $v_l \in \mathcal{N}^+_j$ a \textit{unique order} $P_{lj}$ in the set $\{0,1,..., \mathcal{D}_j^+ -1\}$; 
\item sets $tr^{(j)} = 0$ and $e = tr^{(j)}$; 
\item sets $z_j[1] = 1$, $z^s_j[1] = z_j[1]$ and $y^s_j[1] = y_j[1]$ (which means that $q^s_j[1] = y^s_j[1] / z^s_j[1]$); 
\item chooses out-neighbor $v_l \in \mathcal{N}_j^+$ according to the predetermined order $P_{lj}$ (initially, it chooses $v_l \in \mathcal{N}_j^+$ such that $P_{lj}=0$) and transmits $z_j[1]$ and $y_j[1]$ to this out-neighbor. Then, it sets $y_j[1] = 0$, $z_j[1] = 0$, $\text{pass}_j = 0$; 
\item sets $tr^{(j)} = tr^{(j)} + 1$ and $e = tr^{(j)} \mod \mathcal{D}^+_j$; 
\end{list4} 
\textbf{Iteration:} For $\mu=1,2,\dots$, each node $v_j \in \mathcal{V}$, does the following: 
\begin{list4} 
\item receives $y_i[\mu]$ and $z_i[\mu]$ from its in-neighbors $v_i \in \mathcal{N}_j^-$ and sets 
$$
y_j[\mu+1] = y_j[\mu] + \sum_{v_i \in \mathcal{N}_j^-} w_{ji}[\mu]y_i[\mu] ,
$$
and 
$$
z_j[\mu+1] = z_j[\mu] + \sum_{v_i \in \mathcal{N}_j^-} w_{ji}[\mu]z_i[\mu] ,
$$
where  $w_{ji}[\mu] = 0$ if no message is received (otherwise $w_{ji}[\mu] = 1$); 
\item sets $\text{pass}_j = 1$, sets $\text{dim}_j = 1$; 
\item \underline{Event Trigger Conditions:} \textbf{while} $\text{dim}_j \leq d$ \textbf{then}
\\ C$1$: \textbf{if} $z_j[\mu+1] > z^s_j[\mu]$ \textbf{break}; 
\\ C$2$: \textbf{if} $z_j[\mu+1] < z^s_j[\mu]$ \textbf{sets} $\text{dim}_j = 0$, \textbf{break}; 
\\ C$3$: \textbf{if} $z_j[\mu+1] = z^s_j[\mu]$ and $y_{j(\text{dim}_j)}[\mu+1] > y^s_{j(\text{dim}_j)}[\mu]$ \textbf{break}; 
\\ C$4$: \textbf{if} $z_j[\mu+1] = z^s_j[\mu]$ and $y_{j(\text{dim}_j)}[\mu+1] = y^s_{j(\text{dim}_j)}[\mu]$ \textbf{sets} $\text{dim}_j = \text{dim}_j + 1$; 
\\ C$5$: \textbf{if} $z_j[\mu+1] = z^s_j[\mu]$ and $y_{j(\text{dim}_j)}[\mu+1] < y^s_{j(\text{dim}_j)}[\mu]$ \textbf{sets} $\text{dim}_j = 0$, \textbf{break}; 
\item \textbf{if} $\text{pass}_j = 1$: 
\begin{list4a}
\item sets $z^s_j[\mu+1] = z_j[\mu+1]$, $y^s_j[\mu+1] = y_j[\mu+1]$, 
$$
q^s_j[\mu+1] = \frac{y^s_j[\mu+1]}{z^s_j[\mu+1]} .
$$
\item transmits $z_j[\mu+1]$ and $y_j[\mu+1]$ towards out-neighbor $v_{\lambda} \in \mathcal{N}_j^+$ for which $P_{\lambda j} = e$ and it sets $y_j[\mu+1] = 0$ and $z_j[\mu+1] = 0$. 
Then it sets $tr^{(j)} = tr^{(j)} + 1$ and $e = tr^{(j)} \mod \mathcal{D}^+_j$. 
\end{list4a}
\end{list4}
\textbf{Output:} \eqref{alpha_q_no_oscill}  holds for every $v_j \in \mathcal{V}$. 
\label{algorithm_1_det_av_multi}
\end{varalgorithm}

The intuition of Algorithm~\ref{algorithm_1_det_av_multi} is the following. 
Each node $v_j$ receives the mass variables $y_i[k]$ and $z_i[k]$ from its in-neighbors $v_i \in \mathcal{N}_j^-$ and sums them along with its stored mass variables ($y_j[k]$ and $z_j[k]$). 
During the event-triggered conditions C$1$ - C$5$, node $v_j$ compares each element of the received vectors against its stored vectors. 
According to the event-triggered conditions, it decides whether it will update its state variables and will perform transmission towards one of its out-neighbors. 
If it performs a transmission, it sets its stored mass variables equal to zero and repeats the procedure. 

\begin{defn}\label{Definition_Quant_Av}
The system is able to achieve \textit{exact} quantized average consensus in the form of a quantized fraction if, for every $v_j \in \mathcal{V}$, there exists $\mu_0 \in \mathbb{Z}_+$ so that for every $v_j \in \mathcal{V}$ we have
\begin{equation}\label{alpha_q_no_oscill}
q^s_j[\mu] = \frac{\sum_{l=1}^{n}{y_l[0]}}{n} , 
\end{equation}
for $\mu \geq \mu_0$, where $q$ is the real average of the initial states defined as 
\begin{equation}\label{real_av}
q = \frac{\sum_{l=1}^{n}{y_l[0]}}{n} .
\end{equation}
\end{defn}

Let us now consider the following setup. 

{\it Setup~$1$:} Consider a strongly connected digraph $\mathcal{G}_d = (\mathcal{V}, \mathcal{E})$ with $n=|\mathcal{V}|$ nodes and $m=|\mathcal{E}|$ edges. 
Each node $v_j \in \mathcal{V}$ has an initial quantized state $y_j[0] \in \mathbb{Z}^{d}$. 
During the execution of the Algorithm~\ref{algorithm_1_det_av_multi}, at time step $\mu_1$, there is at least one node $v_{j'} \in \mathcal{V}$, for which 
\begin{equation}\label{great_z_prop1_det_multi}
z_{j'}[\mu_1] \geq z_i[\mu_1], \ \forall v_i \in \mathcal{V}.
\end{equation}
Then, among the nodes $v_{j'}$ for which (\ref{great_z_prop1_det_multi}) holds, there is at least one node $v_j$ for which 
\begin{equation}\label{great_z_prop2_det_multi}
y_{j(\text{dim}_j)}[\mu_1] \geq y_{j'(\text{dim}_j)}[\mu_1] , \ v_j, v_{j'} \in \{ v_i \in \mathcal{V} \ | \ (\ref{great_z_prop1_det_multi}) \ \text{holds} \} , 
\end{equation}
for every $\text{dim}_j \in \{1, 2, ..., d\}$.  
For notational convenience we will call the mass variables of node $v_j$ for which (\ref{great_z_prop1_det_multi}) and (\ref{great_z_prop2_det_multi}) hold as the ``leading mass'' (or ``leading masses''). 

In the following theorem we present the deterministic convergence of Algorithm~\ref{algorithm_1_det_av_multi}. 
The proof of the theorem is similar to Proposition~$1$ in \cite{2020:Rikos_TAC_Mass_Acc} and is omitted. 

\begin{theorem}[]
\label{Conver_Quant_Av_multi}
Under {\it Setup~$1$} we have that the execution of Algorithm~\ref{algorithm_1_det_av_multi} allows each node $v_j \in \mathcal{V}$ to reach quantized average consensus after a finite number of steps $\mathcal{S}_t$, bounded by $\mathcal{S}_t \leq nm^2$. 
\end{theorem}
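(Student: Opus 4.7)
My plan is to lift the scalar argument of Proposition~1 in \cite{2020:Rikos_TAC_Mass_Acc} to the vector setting by replacing the scalar comparison of mass variables with a \emph{lexicographic} comparison on the tuple $(z_j,y_{j(1)},y_{j(2)},\ldots,y_{j(d)})$. The event-triggered conditions C$1$--C$5$ are precisely the dimension-by-dimension implementation of such a lexicographic test: C$1$--C$2$ handle strict dominance (or strict loss) in the $z$-coordinate, while C$3$--C$5$ walk through the coordinates of $y$ only when ties occur in $z$ and all earlier $y$-coordinates. Once this is recognized, essentially the same mass-accumulation argument from the scalar proof should carry over.

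\textbf{Invariants I would establish first.} Since transmissions only move mass between nodes, and receptions simply add incoming mass, I would first verify the conservation laws
\begin{equation*}
\sum_{v_j\in\mathcal{V}} z_j[\mu]=n,\qquad \sum_{v_j\in\mathcal{V}} y_j[\mu]=\sum_{l=1}^{n} y_l[0],
\end{equation*}
the second holding coordinate-wise. I would then note that the total order induced by the lexicographic comparison on $(z_j,y_{j(1)},\ldots,y_{j(d)})$ is preserved in the same structural sense as in the scalar case: a receiving node's tuple, after summing in all incoming mass, lexicographically dominates any single incoming tuple (with ties only when all incoming tuples are equal to the stored one). Consequently, under Setup~$1$ a ``leading mass'' (the lex-maximum tuple over all nodes) is well-defined at every time step, and the sequence of its values is non-decreasing in the lex order.

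\textbf{Leading-mass absorption via round robin.} The next step is to argue, as in \cite{2020:Rikos_TAC_Mass_Acc}, that the leading mass is eventually \emph{unique} and that it absorbs all other mass in the network. The round-robin ordering $P_{lj}$ guarantees that within at most $\mathcal{D}_j^+$ consecutive transmissions from $v_j$, the leading mass reaches every out-neighbor; crucially, the event-trigger check C$1$ ensures that whenever the leading mass encounters a node holding a smaller $z$, the combined tuple strictly dominates and is forwarded again. The coordinate-by-coordinate check in C$3$--C$5$ handles the equal-$z$ case: either the merge strictly increases some $y$-coordinate (and the new leader continues propagating) or the tuples are identical (so no new propagation is needed). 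In either case, the number of distinct leading tuples is finite, each survives only while the walk traverses new edges, and an upper bound of $nm^2$ follows by the same counting as in the scalar proof (at most $n$ absorptions, each requiring at most $m^2$ steps for the leader to sweep through the graph under round robin). Once a single tuple with $z=n$ and $y=\sum_{l}y_l[0]$ exists, its round-robin dissemination establishes \eqref{alpha_q_no_oscill} at every node.

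\textbf{Main obstacle.} The delicate point is verifying that the lexicographic tie-breaking in C$3$--C$5$ does not create pathological cycles in which two nodes repeatedly exchange tuples of equal $z$ but alternately larger $y$-coordinates, which would inflate the bound beyond $nm^2$. I would handle this by showing that every such exchange strictly increases the lex-rank of the stored tuple at some receiving node, and that the total number of distinct lex-ranks any stored tuple can take is bounded by the number of distinct mass fragments that can coexist in the network (at most $n$), thereby recovering the scalar bound. Carrying this monotonicity argument through the vector case is the only substantive technical step beyond mimicking \cite{2020:Rikos_TAC_Mass_Acc}; all other components of the proof transfer verbatim.
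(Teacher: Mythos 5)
Your proposal is correct and follows essentially the same route as the paper, which in fact omits the proof entirely and simply states that it is ``similar to Proposition~1 in \cite{2020:Rikos_TAC_Mass_Acc}''; your reading of conditions C$1$--C$5$ as a coordinate-wise implementation of a lexicographic order on $(z_j, y_{j(1)},\ldots,y_{j(d)})$, together with the conservation invariants and the leading-mass absorption count yielding $nm^2$, is precisely the adaptation of the scalar mass-accumulation argument that the authors gesture at. If anything, your sketch supplies the one substantive detail (why lexicographic tie-breaking cannot cycle) that the paper leaves implicit.
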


\begin{remark}
For developing our results in this paper, we rely on the operation of Algorithm~\ref{algorithm_1_det_av_multi}. 
As mentioned previously, this algorithm allows deterministic convergence after a finite number of time steps as shown in Theorem~\ref{Conver_Quant_Av_multi}. 
Please note that we can also rely on the operation of the algorithm in \cite{2018:RikosHadj} for developing our results. 
The operation of \cite{2018:RikosHadj} is simpler compared to Algorithm~\ref{algorithm_1_det_av_multi}. 
However, \cite{2018:RikosHadj} does not exhibit deterministic convergence but rather converges with high probability to the exact real average $q$ in \eqref{real_av} after a finite number of time steps. 
\end{remark}

\subsection{Finite Time $k$-means Clustering Algorithm with Quantized Communication}\label{distr_algo_kmeans_quant}

We now present a distributed algorithm which solves Problem \textbf{P1} presented in Section~\ref{modified_kmeans}. 
The proposed algorithm is detailed as Algorithm~\ref{algorithm1} below and allows each node in the network to calculate, while processing and transmitting quantized messages, in a finite number of time steps, a set of centroids $\textbf{c}_1, \ldots, \textbf{c}_k$, each associated to a cluster, which fulfill \eqref{optptob}. 
Furthermore, each node is able to determine whether convergence has been achieved and proceed to cease transmissions. 
To solve the $k$-means clustering problem, we make the following two assumptions which are necessary for the operation of Algorithm~\ref{algorithm1}. 

\begin{assum}\label{Diam_known}
Every node $v_j \in \mathcal{V}$ knows the diameter of the network $D$ (or an upper bound $D'$). 
\end{assum}
\begin{assum}\label{leader_choosing}
Each node $v_j$ knows the initial set of centroids $C[0] = [ \textbf{c}_{1}[0], \textbf{c}_{2}[0], ..., \textbf{c}_{k}[0] ] \in \mathbb{R}^{d \times k} $ ($k < n$). 
\end{assum}

Assumption~\ref{Diam_known} is a necessary condition for the $\min$- and $\max$-consensus algorithm, so that each node $v_j$ is able to determine whether convergence has been achieved and thus our proposed algorithm can terminate. 
Note, however, that this assumption can be relaxed if we utilize the distributed algorithm in \cite{2021:Rikos_Hadj_Johan_2} instead of Algorithm~\ref{algorithm_1_det_av_multi}. 
The algorithm in \cite{2021:Rikos_Hadj_Johan_2} converges to the exact real average in finite time without requiring knowledge of the network diameter $D$. 
Assumption~\ref{leader_choosing} is a necessary condition so that each node can calculate the updated value of the centroids without having to communicate their real values to other nodes (because communication is restricted to quantized values). 

\begin{remark}
Regarding Assumption~\ref{leader_choosing}, previous work in \cite{2015:Oliva} ensures that one node is elected as the leader node and propagates the real values of the centroids to every node. 
Note that in our case, nodes communicate by exchanging quantized values. 
Therefore, each node needs to know the initial set of centroids in order to calculate their updated values without the need of a leader node (which transmits the updated value of the centroids to every node in the network). 
When the initial set of centroids is known only to a certain leader node, then the leader node can propagate the set of centroids to every node if the initial set of centroids are quantized values.
Thus, after $D$ time steps, every node in the network will know the initial centroids and Assumption~\ref{leader_choosing} will be fulfilled. 
\end{remark}

We now describe the main operations of Algorithm~\ref{algorithm1}. 
The initialization involves the following steps: 

\textbf{Initialization. Centroid Selection and Unique Order:} 
Each node $v_j \in \mathcal{V}$ has a quantized state $x_j \in \mathbb{Z}^{d}$. 
Then, it assigns to each of its outgoing edges a \textit{unique order}. 

The iteration involves the following steps: 

\textbf{Iteration - Step~$1$. Cluster Assignment and Labeled Multidimensional Deterministic Exact Quantized Average Consensus:} 
At each step $\mu$, each node $v_j$ assigns its value $x_j$ to the nearest centroid according to \eqref{Assign} (since each node knows the set of initial centroids $C(0) = [ \textbf{c}_{1}(0), \textbf{c}_{2}(0), ..., \textbf{c}_{k}(0) ] \in \mathbb{R}^{d \times k} $ ($k < n$)). 
This means that it sets $r_{\lambda j} = 1$ (where $\textbf{c}_{\lambda}(0)$ is the nearest centroid) with respect to \eqref{assign_cond}. 
Then, it performs $k$ quantized average consensus operations -- each operation is done with the states of the nodes that belong in the same cluster. 
Specifically, each node $v_j$ executes $k$ times in parallel Algorithm~\ref{algorithm_1_det_av_multi} in Section~\ref{Multidim_Prel_Aver}. 
Each execution is done with the nodes $\{ v_i \in \mathcal{V} \ | \ r_{\lambda i} = r_{\lambda j} = 1 \}$ (i.e., that belong in the same cluster). 
In this way, each node calculates the \textit{exact} updated value of every centroid in finite time. 

\textbf{Iteration - Step~$2$. Labeled Distributed Stopping:} 
Every node $v_j$ performs $k$ parallel $\min-$ and $\max-$consensus operations every $D$ time steps as described in Section~\ref{sec:DistributedCoordination}. 
Each operation is done with the states of the nodes that belong in the same cluster (i.e., with nodes $v_i, v_j$ for which $r_{\lambda j} = r_{\lambda i} = 1$). 
In this way, each node is able to determine whether convergence has been achieved and the updated set of centroids $\textbf{c}_{\gamma}[T + 1]$, for every $\gamma \in \{1, 2, ..., k \}$, has been calculated. 

\textbf{Iteration - Step~$3$. Centroid Update, Cluster Assignment and Algorithm Termination:} 
Once all $k$ executions of Algorithm~\ref{algorithm_1_det_av_multi} have converged, each node $v_j$ updates the stored set of centroids. 
Then, it assigns its value $x_j$ to the nearest updated centroid according to \eqref{Assign}. 
It checks if the previous centroid values are equal to the new centroid values. 
If this condition holds for each node $v_j$, the operation of the algorithm is terminated. 
Otherwise, the iteration is repeated. 

The flowchart for the operation of each node during Algorithm~\ref{algorithm1} is shown in Fig.~\ref{operation_flowchart}. 
In the flowchart we can see that initially each node assigns a unique order to its outgoing edges and it also assigns its value to the cluster characterized by the nearest centroid. 
Then, for each of the $k$ clusters it executes (i) ``Labeled Multidimensional Deterministic Exact Quantized Average Consensus'' (shown in Algorithm~\ref{algorithm_1_det_av_multi} for the case where we have one cluster $k=1$), and (ii) ``Labeled Distributed Stopping''. 
This allows the calculation of the new centroid values. 
Then, it checks if the previous centroid values are equal to the new centroid values. 
If this condition holds, the algorithm has converged and every node terminates its operation. Otherwise, the process is repeated. 

\begin{figure}[t]
    \centering
    \includegraphics[width=5cm]{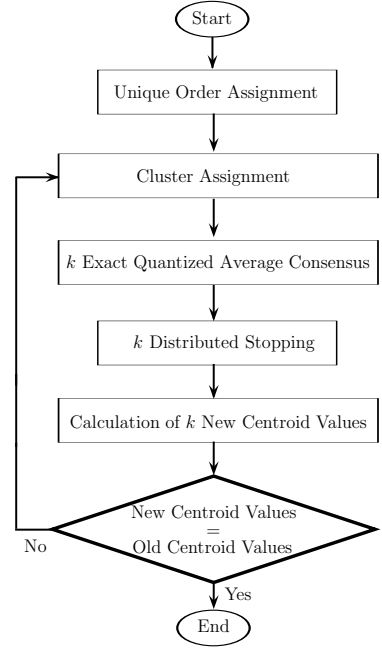}
    \caption{Flowchart for operation of each node during Algorithm~\ref{algorithm1}.}
    \label{operation_flowchart}
\end{figure}

\begin{varalgorithm}{2}
\caption{Finite Time Quantized $k$-means Algorithm}
\noindent \textbf{Input:} A strongly connected digraph $\mathcal{G}_d = (\mathcal{V}, \mathcal{E})$ with $n=|\mathcal{V}|$ nodes and $m=|\mathcal{E}|$ edges. 
Each node $v_j \in \mathcal{V}$ has an initial quantized state $x_j \in \mathbb{Z}^{d}$, and has knowledge of $D$. Each node $v_j$ knows the number of clusters $k < n$ and the initial centroids $C[0] = [ \textbf{c}_{1}[0], \textbf{c}_{2}[0], ..., \textbf{c}_{k}[0] ] \in \mathbb{R}^{d \times k} $. \\
\textbf{Initialization:} Each node $v_j$ sets $\text{flag}_j = 0$ and assigns to each of its outgoing edges $v_l \in \mathcal{N}^+_j$ a \textit{unique order} $P_{lj}$ in the set $\{0,1,..., \mathcal{D}_j^+ -1\}$. \\
\textbf{Iteration:} For $\mu = 1, 2, \dots$, each node $v_j \in \mathcal{V}$, does the following: 
\begin{list4} 
\item \textbf{while} $\text{flag}_j = 0$ \textbf{then} 
\begin{list4a} 
\item sets $r_{j \lambda} = 1$, where $|| x_j - c_{\lambda}[\mu] || \leq || x_j - c_{\gamma}[\mu] || $, where $\gamma \in \{1, 2, ..., k \} \setminus \{ \lambda \}$; 
\item sets $x^{cl}_j[\mu] = x_j$, for $cl = \lambda$, where $r_{j \lambda} = 1$, and $x^{cl}_j[\mu] = 0$, for $cl = \{ 1, 2, ..., k \} \setminus \{ \lambda \}$; 
\item calls Algorithm~\ref{algorithm_max_1a};
\end{list4a} 
\item \textbf{if} 
\begin{equation}\label{stop_cond}
\textbf{c}_{\gamma}(T+1) = \textbf{c}_{\gamma}(T), \ \text{for every} \ \gamma \in \{ 1, 2, ..., k \} ,
\end{equation}
\textbf{then} $\text{flag}_j = 1$; 
\end{list4}
\textbf{Output:} \eqref{optptob}, \eqref{assign_cond}, \eqref{assign_one} hold for every $v_j \in \mathcal{V}$. 
\label{algorithm1}
\end{varalgorithm}

\begin{varalgorithm}{2.A}
\caption{Extended Labeled Multidimensional Deterministic Quantized Average Consensus with Labeled Distributed Stopping}
\noindent \textbf{Input:} $D$, $T$, $\mu$, $x^{cl}_j[\mu]$ for $cl = \{ 1, 2, ..., k \}$, $P_{lj}$ for every $v_l \in \mathcal{N}_j^-$; \\ 
\textbf{Initialization:} $y^{cl}_j[1] = x^{cl}_j[\mu]$ for $cl = \{ 1, 2, ..., k \}$; \\ 
\textbf{Iteration:} For $\mu'=1,2,\dots$, each node $v_j \in \mathcal{V}$, does the following: 
\begin{list4} 
\item \textbf{while} $\text{flag}_j = 0$ \textbf{then} 
\begin{list4a}
\item \textbf{if} $\mu' \mod D = 1$ \textbf{then} sets $M^{cl}_j = x^{cl}_j[\mu'] / z^{cl}_j[\mu']$, $m^{cl}_j = x^{cl}_j[\mu'] / z^{cl}_j[\mu'] $, where $z_j[\mu']=1$, $cl = \{ 1, 2, ..., k \}$; 
\item broadcasts $M^{cl}_j$, $m^{cl}_j$, $cl = \{ 1, 2, ..., k \}$, to every $v_{j} \in \mathcal{N}_j^+$; 
\item receives $M^{cl}_i$, $m^{cl}_i$ from every $v_{i} \in \mathcal{N}_j^-$, $cl = \{ 1, 2, ..., k \}$; 
\item sets $M^{cl}_j = \max_{v_{i} \in \mathcal{N}_j^-\cup \{ v_j \}} \ M^{cl}_i$, $m^{cl}_j = \min_{v_{i} \in \mathcal{N}_j^-\cup \{ v_j \}} \ m^{cl}_j$; 
\item executes Iteration Steps of Algorithm~\ref{algorithm_1_det_av_multi} for each $cl = \{ 1, 2, ..., k \}$ with initial state $y^{cl}_j[\mu]$; 
\item receives $z^{cl}_i[\mu']$, $y^{cl}_i[\mu']$ from $v_i \in \mathcal{N}_j^-$ and sets 
\begin{equation}\label{no_del_eq_y_no_oscil}
y^{cl}_j[\mu'+1] = y^{cl}_j[\mu'] + \sum_{v_i \in \mathcal{N}_j^-}  w_{ji}[\mu'] \ y^{cl}_i[\mu'] , 
\end{equation} 
\begin{equation}\label{no_del_eq_z_no_oscil}
z^{cl}_j[\mu'+1] = z^{cl}_j[\mu'] + \sum_{v_i \in \mathcal{N}_j^-} w_{ji}[\mu'] \ z^{cl}_i[\mu'] ,
\end{equation}
where $w_{ji}[\mu'] = 1$ if node $v_j$ receives a message from $v_i \in \mathcal{N}_j^-$ at iteration $\mu'$ (otherwise $w_{ji}[\mu'] = 0$); 
\item \textbf{if} $\mu' \mod D = 0$ \textbf{then}, \textbf{if} $M^{cl}_j = m^{cl}_j$, for every $cl = \{ 1, 2, ..., k \}$ \textbf{then} sets $\textbf{c}_{cl}[T + 1] = q^{s, cl}_j[\mu']$ for every $cl = \{ 1, 2, ..., k \}$ and sets $\text{flag}_j = 1$. 
\end{list4a}
\end{list4}
\textbf{Output:} $\textbf{c}_{cl}[T + 1]$ for every $cl = \{ 1, 2, ..., k \}$. 
\label{algorithm_max_1a}
\end{varalgorithm}

Next, we show that, during the operation of Algorithm~\ref{algorithm1}, each node $v_j$ is able to (i) calculate a set of centroids $\textbf{c}_1, \ldots, \textbf{c}_k$ that fulfill \eqref{optptob} after a finite number of time steps, and (ii) terminate its operation once convergence has been achieved. 

\begin{theorem}
\label{thm:cluster_convergence}
Consider a strongly connected digraph $\mathcal{G}_d = (\mathcal{V}, \mathcal{E})$ with $n=|\mathcal{V}|$ nodes and $m=|\mathcal{E}|$ edges. 
Each node $v_j \in \mathcal{V}$ has an initial quantized state $x_j \in \mathbb{Z}^{d}$. 
Each node $v_j$ knows the $k$ initial clusters (where $k < n$) and their initial centroids $C[0] = [ \textbf{c}_{1}[0], \textbf{c}_{2}[0], ..., \textbf{c}_{k}[0] ] \in \mathbb{R}^{d \times k} $. 
During the operation of Algorithm~\ref{algorithm1}, each node $v_j$ is able to address problem \textbf{P1} in Section~\ref{modified_kmeans} after a finite number of time steps $C_t$ bounded by 
\begin{equation}
    C_t \leq T (D + nm^2) ,
\end{equation}
where $T$ is the number of new centroid calculations until \eqref{stop_cond} holds, $D$ is the diameter of network $\mathcal{G}_d$. 
\end{theorem}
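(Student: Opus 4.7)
The plan is to decompose $C_t$ as a product: $T$ counts the outer iterations of Algorithm~\ref{algorithm1} (each consisting of a cluster assignment followed by one call of Algorithm~\ref{algorithm_max_1a}), and $D + nm^2$ bounds the number of time steps consumed by each call. Multiplying these two quantities yields the stated bound. The finiteness of $T$ itself is separately argued from the standard monotonicity of the $k$-means objective.

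First, I would fix an arbitrary outer iteration and bound the runtime of Algorithm~\ref{algorithm_max_1a}. Inside this subroutine, each node runs $k$ instances of Algorithm~\ref{algorithm_1_det_av_multi} in parallel, one per cluster label, on the vector-valued mass variables $y^{cl}_j, z^{cl}_j$. By Theorem~\ref{Conver_Quant_Av_multi}, each parallel instance drives the stored ratios $q^{s,cl}_j$ to the exact cluster average within at most $nm^2$ steps (the same bound holds for every cluster, since the digraph restricted to the nodes active in each $cl$ is still contained in $\mathcal{G}_d$ and the bound is monotone in $n$ and $m$). Once all $k$ executions have converged, the local values $q^{s,cl}_j$ remain frozen for every $v_j \in \mathcal{V}$ and every cluster $cl$. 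At that point, the periodic $\min$-/$\max$-consensus rounds described in Section~\ref{sec:DistributedCoordination}, which are re-initialized every $D$ steps inside Algorithm~\ref{algorithm_max_1a}, reveal to every node that $M^{cl}_j = m^{cl}_j$ within $D$ further steps. This triggers $\text{flag}_j = 1$ simultaneously at every node, yielding a per-iteration bound of $D + nm^2$ steps.

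Second, I would show that the outer iteration count $T$ is finite via the classical $k$-means argument: the cost function $\sum_{i=1}^k \sum_{j=1}^n r_{ij}\|x_j - \mathbf{c}_i\|^2$ is non-increasing under both the assignment step (by \eqref{Assign}) and the refinement step (since the mean minimizes within-cluster sum of squares). Because each node computes the \emph{exact} centroid average as a quantized fraction (no quantization error is injected), the update is deterministic and the cost strictly decreases whenever the assignment vector $r$ changes. Since the set of possible assignments has cardinality at most $k^n$, the sequence must stabilize at some $T$, at which point \eqref{stop_cond} holds at every node and the outer loop terminates.

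Combining the two bounds gives $C_t \leq T(D + nm^2)$. The main technical obstacle is the careful bookkeeping of the interleaving between the quantized average consensus dynamics of Algorithm~\ref{algorithm_1_det_av_multi} and the periodic resets of the $\min$-/$\max$-consensus variables in Algorithm~\ref{algorithm_max_1a}: a naive analysis could yield an extra $D$-step window (if convergence of the inner algorithm occurs just after a reset) and inflate the bound to $2D + nm^2$. I expect this can be absorbed into the $D$ term by aligning the reset epochs with the iteration counter $\mu'$, so that the $D$-step consensus check immediately following convergence already uses the frozen $q^{s,cl}_j$ values, rather than partially stale ones.
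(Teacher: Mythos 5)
Your proposal is correct and follows essentially the same route as the paper's own proof: both decompose the running time into $T$ outer iterations, each bounded by $nm^2$ steps for the quantized average consensus (via Theorem~\ref{Conver_Quant_Av_multi}) plus $D$ steps for the $\min$-/$\max$-consensus stopping check, and multiply. In fact you supply more detail than the paper does — the explicit monotone-decrease argument for the finiteness of $T$ and the remark about aligning the $D$-step reset epochs with the convergence of the inner algorithm — whereas the paper simply asserts that the distributed scheme mirrors the centralized $k$-means iteration and adds the two per-round costs.
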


\begin{proof}
Similarly to the work in \cite{2015:Oliva}, the operation of Algorithm~\ref{algorithm1} follows the same step as that for the centralized $k$-means algorithm except that every step is performed in a decentralized manner. 
Therefore, Algorithm~\ref{algorithm1} is able to calculate, in a distributed fashion, a set of centroids $\textbf{c}_1, \ldots, \textbf{c}_k$ (where $\textbf{c}_{\gamma} \in \mathbb{R}^d$, for $\gamma \in \{ 1, 2, ..., k \}$), each associated to a cluster, which fulfill \eqref{optptob}, \eqref{assign_cond} and \eqref{assign_one} in finite time. 

During the operation of Algorithm~\ref{algorithm1}, we have that each node in the network executes Algorithms~\ref{algorithm_1_det_av_multi}, \ref{algorithm_max_1a}, until \eqref{stop_cond} holds 
(i.e., the new centroid values are equal to the old centroid values). 
The required number of time steps for convergence of Algorithm~\ref{algorithm_1_det_av_multi} is equal to $n m^2$ (see Theorem~\ref{Conver_Quant_Av_multi}). 
The required number of time steps of the distributed stopping protocol is $D$ since its operation relies on max- and min-consensus (see Section~\ref{sec:DistributedCoordination}). 
Furthermore, we use $T$ to denote the number of new centroid calculations until \eqref{stop_cond} holds. 
As a result, we have that during the operation of Algorithm~\ref{algorithm1} after a finite number of time steps $C_t$ bounded by 
$
    C_t \leq T (D + nm^2), 
$
each node $v_j$ is able to address problem \textbf{P1} in Section~\ref{modified_kmeans}. 
\end{proof}

\subsection{Advantages of Finite Time $k$-means Clustering with Quantized Communication}\label{compar_prev_work}

Compared to \cite{2015:Oliva} and \cite{2017:Zheng_Qin}, the main advantage of Algorithm~\ref{algorithm1} is in the (i) network requirements, (ii) centroid calculation step and, (iii) distributed stopping step. 

In \cite{2015:Oliva} the new centroid values are calculated via a finite time average consensus algorithm which operates over the corresponding cluster. 
Furthermore, \cite{2015:Oliva} assumes that the graph underlying the agents' interaction is undirected (in particular, the finite-time average consensus algorithms adopted are not suitable for directed graphs).
Moreover, in \cite{2015:Oliva} agents need to know on upper bound for the number of nodes in the network.
Finally, in \cite{2015:Oliva} each node processes and transmits real valued messages. Therefore, in case communication is quantized, there are no convergence guarantees; moreover, the messages require more bandwidth than the proposed approach. 

In \cite{2017:Zheng_Qin} the underlying network is modelled as a strongly connected digraph. 
However, in order to calculate the average of the node's values the digraph needs to be weight balanced which is a strong assumption (e.g., see \cite{2014:TCNS_June} and references therein). 
Furthermore, the algorithm requires a normalization step (used to obtain the maximum and minimum value of every component of every observation) and execution of $k$-means++ algorithm to produce the initial centroids. 
Additionally, each node knows an upper bound for the number of nodes in the network. 
Finally, as in \cite{2015:Oliva}, this algorithm does not provide convergence guarantees when communication is quantized and messages require more bandwidth than our proposed approach.

In Algorithm~\ref{algorithm1} the underlying network is modelled as a strongly connected digraph. 
However, in order to calculate the average of the nodes' values, the digraph does not need to be weight balanced. 
Furthermore, in order to terminate its operation, each node knows an upper bound over the network diameter and not on the number of nodes. 
Additionally, each node transmits quantized messages, which do not require a large amount of bandwidth for communication and are more suitable for realistic applications (i.e., nodes need to transmit small messages of finite length). 
Finally, by utilizing Algorithm~\ref{algorithm_1_det_av_multi} (which is an extension of \cite{2020:Rikos_TAC_Mass_Acc}), nodes are able to calculate the \textit{exact} solution in finite time without a final error. 
This means that the set of centroids that minimize \eqref{original_optptob} are calculated exactly. 
As a result, Algorithm~\ref{algorithm1} computes the exact minimum of the sum of squares within every cluster (see Section~\ref{sec:probForm}) (and not an approximation of the minimum due to a final error). 











%
%
%
%

\section{Simulation Results} \label{sec:results}

We now present simulation results to illustrate the behavior of Algorithm~\ref{algorithm1} over several examples. 


\textbf{Evaluation over a Random Network of $100$ Nodes.}
We execute Algorithm~\ref{algorithm1} over a random digraph of $100$ nodes with diameter $D = 4$. 
The $100$ nodes are randomly selected in the region $[50, 50] \times [50, 50]$ with uniform probability.
During the operation of Algorithm~\ref{algorithm1} we partition nodes into $k=3$ clusters and calculate the centroid values that fulfill \eqref{optptob} in finite time. 
In this case, the observations coincide with the agent’s positions (i.e., the agents are clustered according to their position). 
In Fig.~\ref{100nodes_50times50}~(A) the initial positions of the $3$ centroids are marked by red, blue and green crosses, the nodes are marked with circles, and each circle color is the color of the nearest initial centroid (i.e., nodes are marked red, blue, or green color). 
In Fig.~\ref{100nodes_50times50}~(B) we show the trajectories of the centroids which are marked with red, blue, or green lines according to the color of the centroid. 
We can see that after $T = 10$ the centroid values fulfill \eqref{stop_cond}. 
This means that the nodes are able to determine that convergence has been achieved and thus proceed to terminate their operation. 
In Fig.~\ref{obj100nodes_50times50obj} we show the evolution of the Distance Objective Function $F(T)$ defined as
\begin{equation}\label{DistanceObjectiveFunction}
    F(T) = \sum_{i=1}^k \sum_{v_j \in C_i(T)} ||x_j-\textbf{c}_i(T)||^2 , 
\end{equation}
for the network of Fig.~\ref{100nodes_50times50}. 
We can see that $F[T]$ is non-increasing over time. 
Furthermore, we can see that $F[9] = F[10]$. 
This means that for $T = 10$ the centroid values fulfill \eqref{stop_cond} and nodes terminate their operation (also shown in Fig.~\ref{100nodes_50times50}~(B)). 

\begin{figure}[t]
    \centering
    \includegraphics[width=7.0cm]{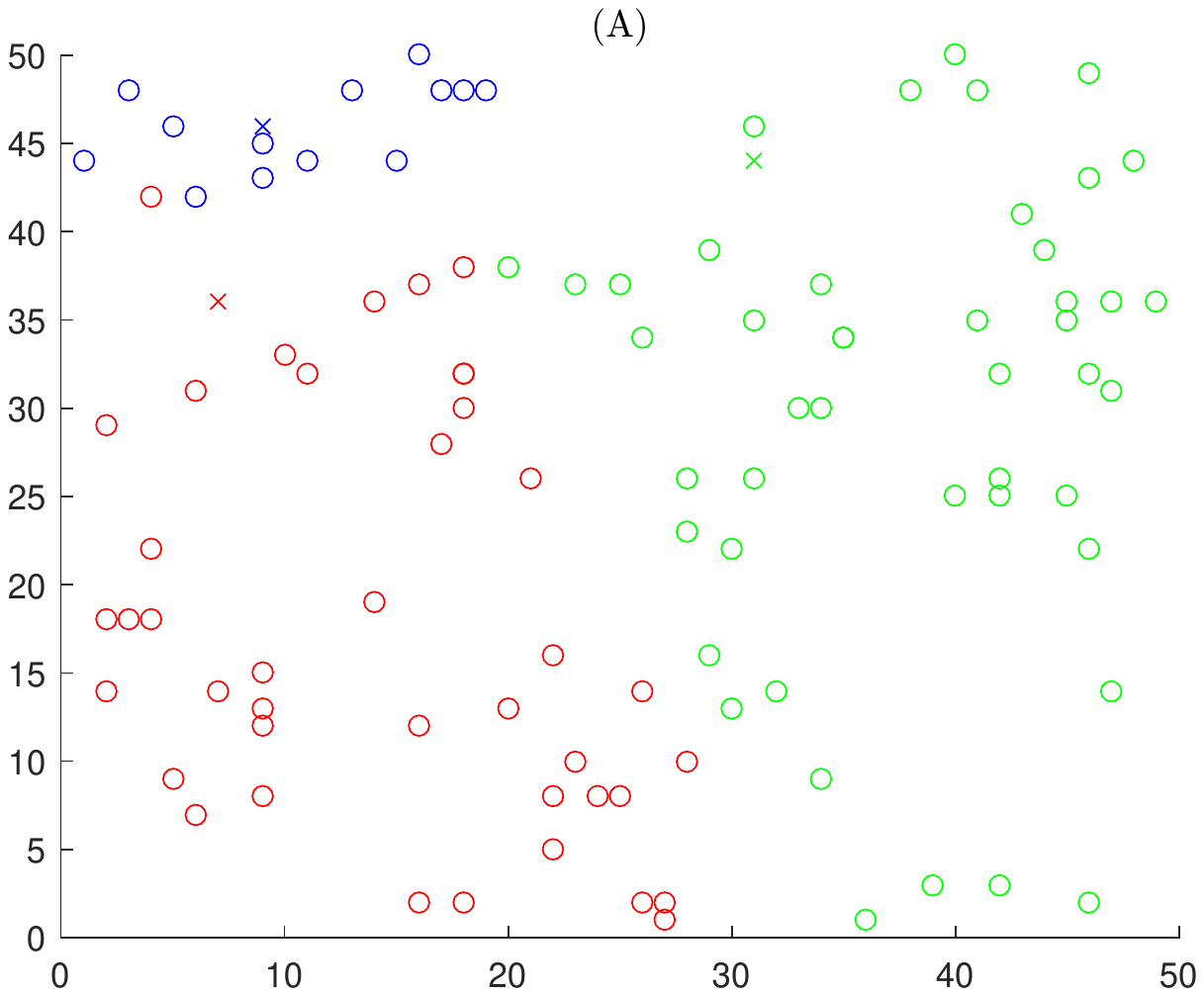}\\ \vspace{.3cm}
    \includegraphics[width=7.0cm]{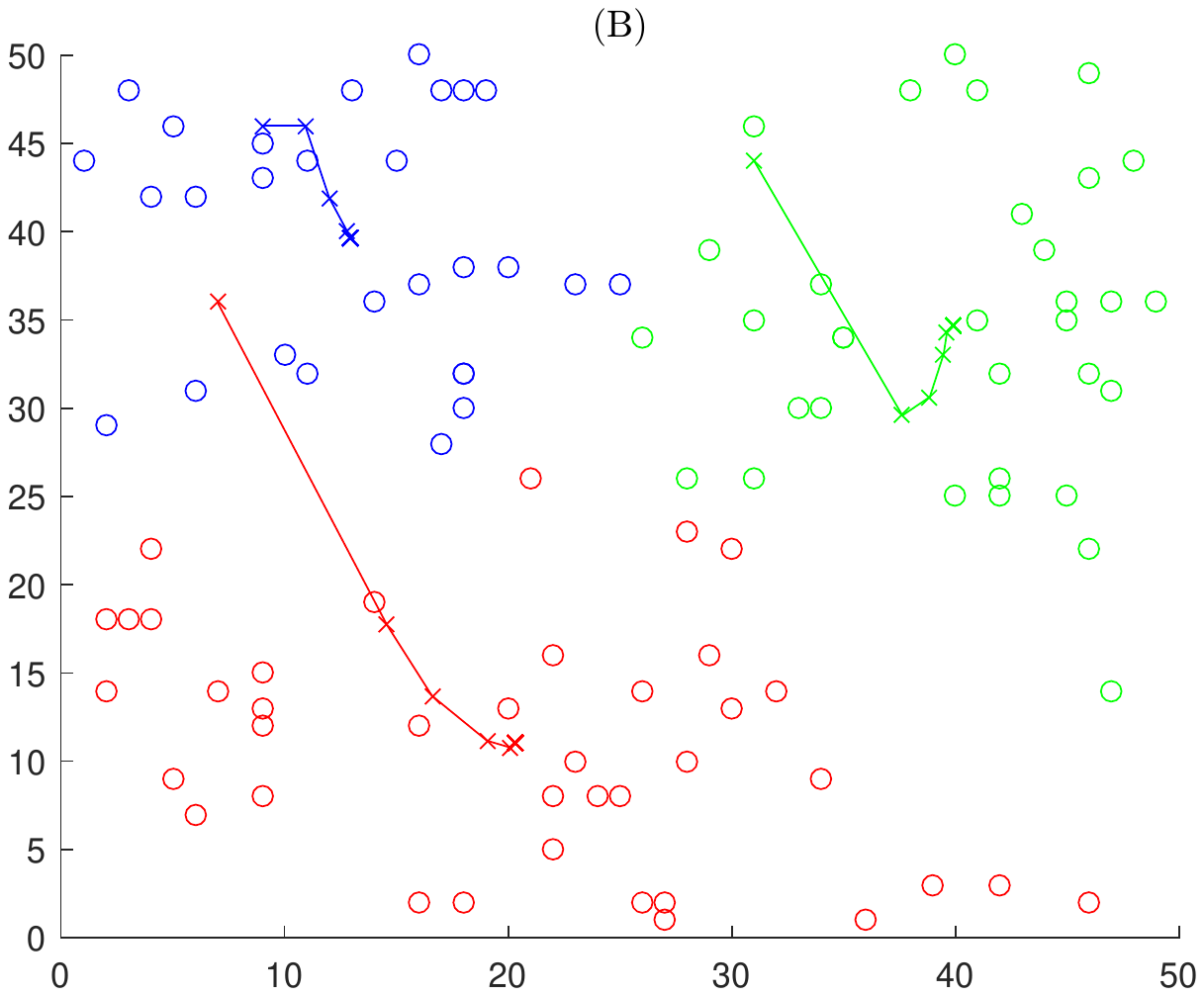}
    \caption{Execution of Algorithm~\ref{algorithm1} over a random directed network with $100$ nodes and $3$ clusters. Positions of nodes are marked with circles and centroid positions are marked with red, blue and green crosses. (A) Initial centroid positions, and initial position and cluster assignment for every node. (B) Centroid trajectories, final centroid positions, and final cluster assignment for every node.} 
    \label{100nodes_50times50}
\end{figure}

\begin{figure}[t]
    \centering
    \includegraphics[width=7.5cm]{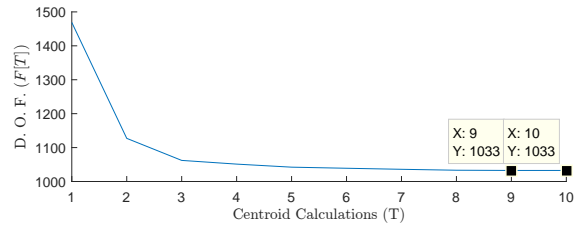}
    \caption{Evolution of Distance Objective Function $F[T]$ during execution of Algorithm~\ref{algorithm1} over a random directed network with $100$ nodes and $3$ clusters.} 
    \label{obj100nodes_50times50obj}
\end{figure}


\textbf{Evaluation over $1000$ Random Networks of $1000$ Nodes.}
We execute of Algorithm~\ref{algorithm1} over $1000$ random networks each consisting of $1000$ nodes, with diameter $D \in \{ 3, 4, 5 \}$. 
During the operation of Algorithm~\ref{algorithm1}, we aim to partition nodes into $k = 3$ clusters. 
For each of the $1000$ simulations, the nodes and the centroid positions are randomly selected in the region $[100, 100] \times [100, 100]$ with uniform probability. 
We present the distribution $\mathcal{F}[T]$ of the new centroid calculations $T$ until \eqref{stop_cond} holds for $100$ simulations of Algorithm~\ref{algorithm1}. 
Furthermore, we present the average value $\overline{F}[T]$ of the Distance Objective Function $F[T]$ in \eqref{DistanceObjectiveFunction}, averaged over the $1000$ simulations of Algorithm~\ref{algorithm1}. 

\begin{figure}[t]
    \centering
    \includegraphics[width=7.2cm]{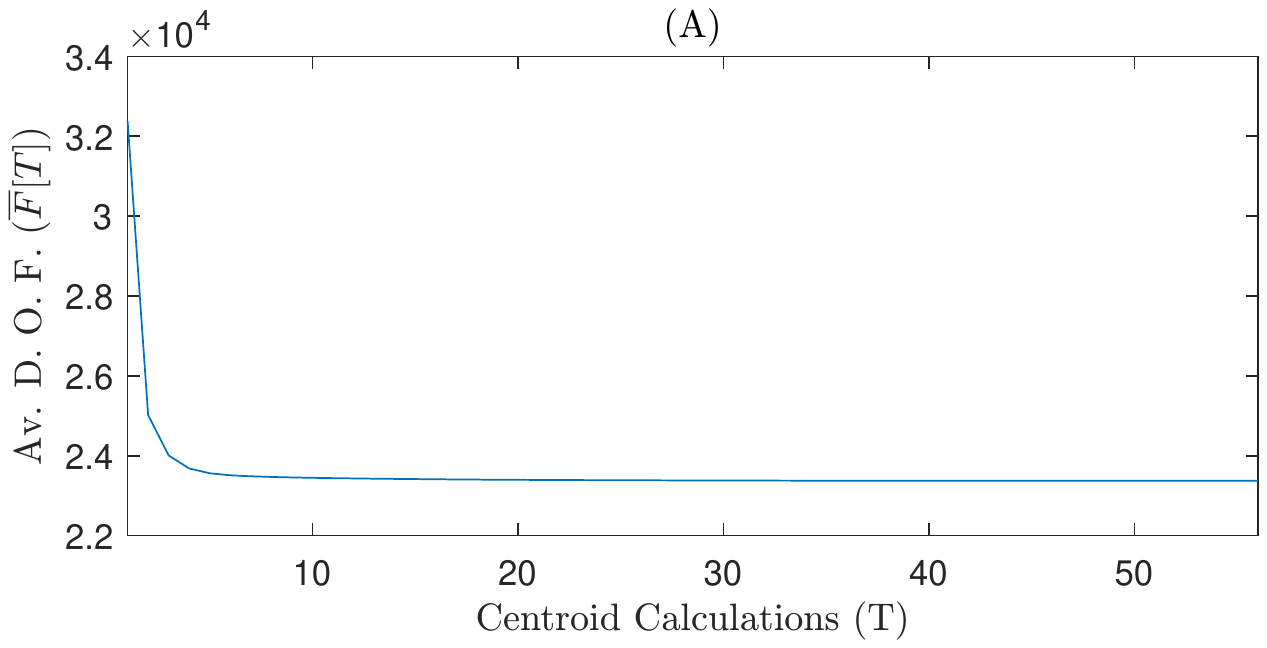}\\ \vspace{.3cm}
    \includegraphics[width=7.2cm]{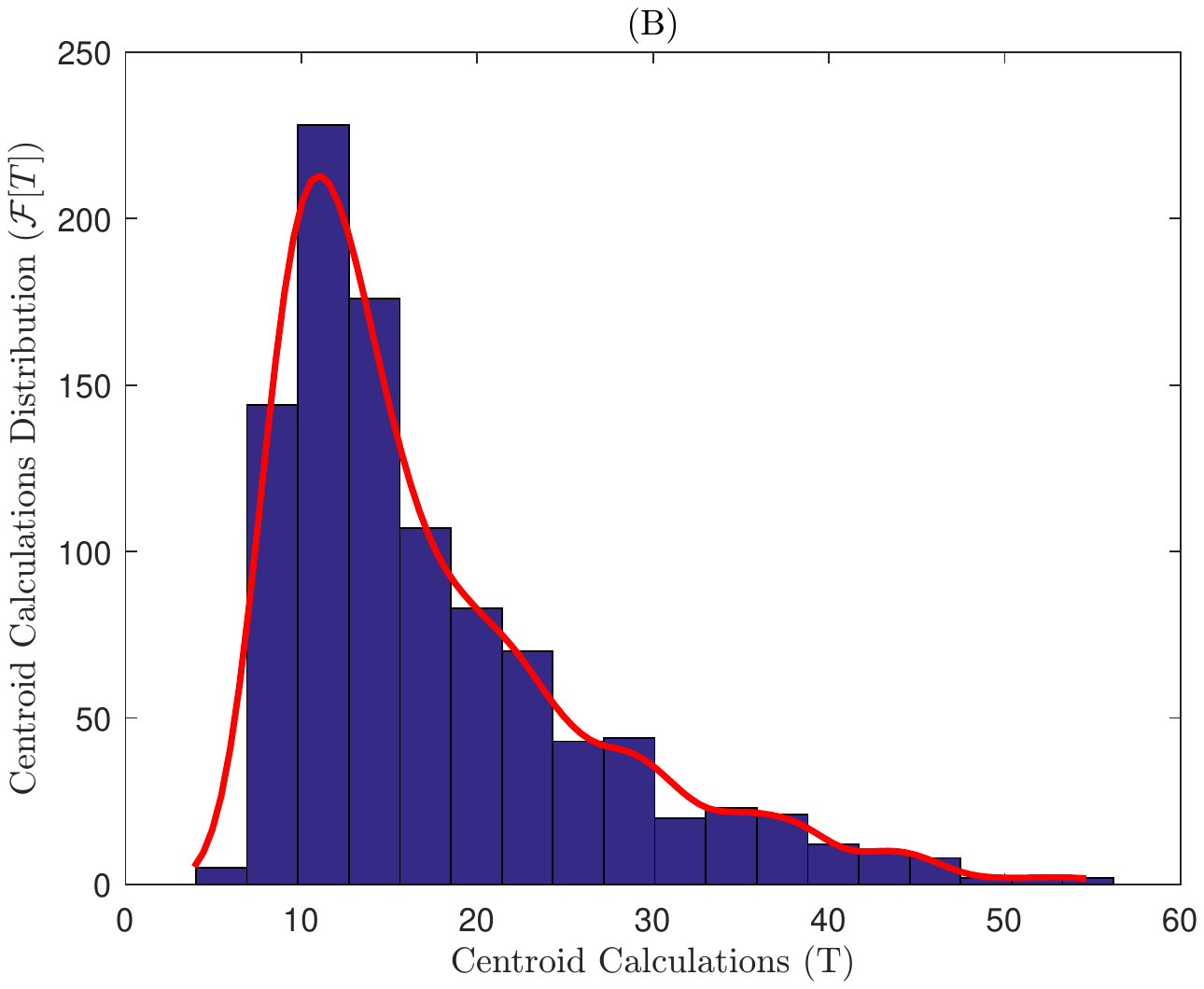}
    \caption{Executions of Algorithm~\ref{algorithm1} over $1000$ random directed networks of $1000$ nodes and $k = 3$ clusters. (A) Evolution of Average Distance Objective Function $\overline{F}[T]$ during execution of Algorithm~\ref{algorithm1}, averaged over $1000$ executions. (B) Distribution $\mathcal{F}[T]$ of number of new centroid calculations $T$ until \eqref{stop_cond} holds, for $1000$ executions.} 
    \label{1000nodes_1000times100_k3}
\end{figure}

In Fig.~\ref{1000nodes_1000times100_k3} (A), for $1000$ executions of Algorithm~\ref{algorithm1} we have that $\overline{F}[T]$ almost converges after $17$ centroid calculations $T$. 
Also, note that in Fig.~\ref{1000nodes_1000times100_k3} (A), $\overline{F}[T]$ is plotted for $T \in \{1, 2, ..., 56 \}$, where $56$ is the maximum value of $T$ for Algorithm~\ref{algorithm1} to converge over the $1000$ executions. 
In Fig.~\ref{1000nodes_1000times100_k3} (B), we have that the average value of $T$ for Algorithm~\ref{algorithm1} to converge over the $1000$ executions is $17.39$. 
The minimum value of $T$ is $5$, and the maximum is $56$ (also seen in Fig.~\ref{1000nodes_1000times100_k3} (A)). 
Furthermore, we can see that in most cases, the required $T$ for Algorithm~\ref{algorithm1} to converge over the $1000$ executions is in the set $T \in \{ 8, 9, ..., 20 \}$.

\textbf{Comparison with Previous Literature.} 
We now compare the performance of Algorithm~\ref{algorithm1} against algorithms \cite{2015:Oliva}, and \cite{2017:Zheng_Qin} in the current literature. 
We execute the three algorithms over $1000$ random networks each consisting of $1000$ nodes, with diameter $D \in \{ 3, 4, 5 \}$, and we aim to partition nodes into $k = 3, 6, 12$ clusters. 
Note that the main differences of Algorithm~\ref{algorithm1} compared to \cite{2015:Oliva}, and \cite{2017:Zheng_Qin} are mentioned in Section~\ref{compar_prev_work}. 
In \cite{2017:Zheng_Qin} we initially execute a $k$-means++ algorithm for the initial centroids. 
Furthermore, \cite{2015:Oliva} requires the underlying graph to be undirected. 
For this reason, during the operation of \cite{2015:Oliva} we make the randomly generated underlying digraphs undirected by enforcing that if $(v_l, v_j) \in \mathcal{E}$, then also $(v_j, v_l) \in \mathcal{E}$. 

In Table~\ref{tableavergecentr} we present the average number of centroid calculations $T$ over $1000$ executions of Algorithm~\ref{algorithm1}, \cite{2015:Oliva}, and \cite{2017:Zheng_Qin}. 
We can see that the performance of Algorithm~\ref{algorithm1} is close to the current literature but requires slightly more time steps. 
However, the aim of Algorithm~\ref{algorithm1} is to implement a communication efficient solution to the clustering problem. 
Specifically, Algorithm~\ref{algorithm1} operates with quantized values and requires less assumptions compared to the current literature i.e., in \cite{2017:Zheng_Qin} a $k$-means++ algorithm is initially executed and the network needs to be weight balanced, and in \cite{2015:Oliva} the network of each cluster is undirected and connected (see Section~\ref{compar_prev_work}). 

\begin{center}
\captionof{table}{Average Number of Centroid Calculations During Operation of Algorithm~\ref{algorithm1} (A), 
\cite{2015:Oliva} (B), 
\cite{2017:Zheng_Qin} (C), 
averaged for $1000$ executions over a random digraph of $1000$ nodes.}
\label{tableavergecentr}
\begin{tabular}{|c||r|r|r|}  
\hline
Algorithm & $k = 3$ & $k = 6$ & $k = 12$ \\
\cline{1-4}
(A) & $17.39$ & $20.79$ & $25.49$ \\
(B) & $16.55$ & $21.62$ & $24.16$ \\
(C) & $11.57$ & $16.86$ & $23.25$ \\
\hline
\end{tabular}
\end{center}

%
%
%
%

\section{Conclusions and Future Directions}\label{sec:conclusions}

In this paper, we have considered the problem of $k$-means clustering over a directed network. 
We presented a novel algorithm which is able to address the $k$-means clustering problem in a fully distributed fashion. 
We showed that our algorithm converges after a finite number of time steps, and we provided a deterministic upper bound on convergence time which relies on the network parameters. 
Finally, we demonstrated the operation of our proposed algorithm and compared its performance against other algorithms in the existing literature. 
Please note that to the best of the authors knowledge, this is the first work that tries to tackle the problem of distributed $k$-means clustering using quantized communication while also providing a thorough evaluation. 

Utilizing our algorithm's quantized nature in order to introduce privacy guarantees through cryptographic strategies, is our main future work direction.

\bibliographystyle{IEEEtran}
\bibliography{bibliography}

\end{document}